\newtheorem{THEOREM}{Theorem}
\newtheorem{theorem}{Theorem}[section]
\newtheorem{corollary}[theorem]{Corollary}
\newtheorem{definition}[theorem]{Definition}
\newtheorem{lemma}[theorem]{Lemma}
\newtheorem{observation}[theorem]{Observation}
\newcommand{\N}{\mathbb{N}}
\newcommand{\eqdef}{\stackrel{\Delta}{=}}
\newcommand{\ignore}[1]{}
\begin{document}

\title{\vspace{-2cm}
One (more) line on the most Ancient Algorithm in History}

\author{Bruno Grenet\thanks{LIRMM, Universit\'{e} de Montpellier, CNRS, Montpellier, France. 
Email: {\tt bruno.grenet@lirmm.fr}. }
\and
Ilya Volkovich\thanks{Department of EECS, CSE Division, University of Michigan, Ann Arbor, MI.
Email: {\tt ilyavol@umich.edu}. }}

\date{}

\maketitle

\vspace{-0.3cm}

\begin{abstract}
We give a new simple and short (``one-line'') analysis for the runtime of the well-known Euclidean Algorithm.  While very short simple, the obtained upper bound in near-optimal.
\end{abstract}

\section{Introduction}

Perhaps the most ancient algorithm recorded in history is the Euclidean Algorithm. Published in 300 BC, the algorithm computes the greatest common divisor (gcd) of two integer numbers.
(See Algorithm \ref{alg:Euclid} for the description of the algorithm.) The following is a quote from \cite{Knuth98}:
``[The Euclidean algorithm] is the granddaddy of all algorithms, because it is the oldest nontrivial algorithm that has survived to the present day.''

The Euclidean Algorithm serves as a subroutine for many tasks such as: finding B\'ezout's coefficients, computing multiplicative inverses and the RSA algorithm, Chinese Remainder Theorem and others.
For more details on these and other applications, see \cite{Knuth97,Knuth98}. Yet, despite its simplicity, the actual runtime complexity cannot be easily inferred from its description due to the recursive nature of the algorithm.

As the operations inside each iteration are ``elementary'', the actual runtime complexity is dominated by the number of (recursive) iterations of the algorithm.  Indeed, various upper bounds for this number were given \cite{Lame1844, Sedgewick83, Knuth97,Knuth98,CLRS09}. The first one, presented in \cite{Lame1844}, ties the number of iteration with the so-called \emph{Fibonacci} numbers. %(for details see next page)
Another kind of analysis shows that the larger argument shrinks by a factor of at least $2$ every \emph{two} iterations (see e.g. \cite{Knuth97,CLRS09}). Yet, the proof of this claim requires a somewhat non-trivial case-analysis. 
Note that in the case of univariate polynomials, the analysis is much simpler: the degree of the highest-degree polynomial decreases at each step. A similar easy 
argument was lacking in the integer case since the same phenomenon does 
not occur: it does \textbf{not} hold that the bit-size of the largest integer decreases at each step.

Our new analysis provides a clean, ``one-line'' upper bound which turns out to be optimal.  %Indeed, does not rely on Fibonacci numbers nor it require case analysis. 
The main idea is based on the so-called \emph{Potential Method} (see e.g. \cite{CLRS09}).
More specifically, we define appropriate \emph{potential} functions and show that 
these functions lose a constant fraction of their mass with every recursive iteration of the algorithm. At the same time, the functions are bounded away from zero. Formally, we give a simple proof to the following theorems:

\begin{THEOREM}
\label{THM:main1}
For any $x,y \in \N$, the Euclidean Algorithm performs at most $\log _{1.5} (x+y) + 1$ iterations.
\end{THEOREM}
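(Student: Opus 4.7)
The plan is to follow the potential-method strategy foreshadowed in the introduction. Let $(a_i, b_i)$ denote the state of the algorithm at the start of iteration $i+1$, so that $(a_0, b_0) = (x, y)$ and $(a_{i+1}, b_{i+1}) = (b_i, a_i \bmod b_i)$, and define the potential $\phi_i \eqdef a_i + b_i$. The heart of the argument will be a single one-step inequality:
\[
\phi_{i+1} \;\leq\; \tfrac{2}{3}\, \phi_i \qquad \text{whenever } a_i \geq b_i \geq 1,
\]
i.e., the potential shrinks by a factor of at least $1.5$ in every non-terminating iteration after the first.

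To prove this key inequality, I would use a short case split on $q \eqdef \lfloor a_i / b_i \rfloor \geq 1$: writing $a_i \bmod b_i = a_i - q b_i$, the inequality rearranges to $a_i \leq (3q - 1) b_i$, which follows immediately from $a_i < (q+1) b_i \leq (3q-1) b_i$ since $q \geq 1$. Equivalently, one can split on whether $b_i \leq a_i/2$ (so $a_i \bmod b_i < b_i$ gives $\phi_{i+1} < 2 b_i \leq \tfrac{2}{3}(a_i + b_i)$) or $b_i > a_i/2$ (so $a_i \bmod b_i = a_i - b_i$ gives $\phi_{i+1} = a_i \leq \tfrac{2}{3}(a_i + b_i)$ using $a_i \leq 2 b_i$). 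The main obstacle is really only finding a phrasing clean enough to match the ``one-line'' slogan of the abstract; once the potential $a+b$ is chosen, the rest is mechanical.

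To wrap up, I would observe that $a_1 \geq b_1$ holds automatically (since $b_1 = a_0 \bmod b_0 < b_0 = a_1$, with the degenerate case $y = 0$ contributing zero iterations), and also that $\phi_1 \leq x + y$ regardless of whether $x \geq y$ or $x < y$. Iterating the key inequality from $i = 1$ gives $\phi_n \leq (2/3)^{n-1}\, \phi_1 \leq (2/3)^{n-1}\,(x+y)$, and if the algorithm terminates after exactly $n$ iterations then $b_n = 0$ while $a_n = \gcd(x,y) \geq 1$, so $\phi_n \geq 1$. Rearranging yields $(3/2)^{n-1} \leq x + y$, i.e., $n \leq \log_{1.5}(x+y) + 1$, as claimed.
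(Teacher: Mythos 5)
Your proposal is correct and takes essentially the same route as the paper: the identical potential $a_i+b_i$, the identical one-step contraction by a factor of $2/3$, and the identical wrap-up combining $\phi_n\geq 1$ with the geometric decay. The only differences are cosmetic --- you verify the key inequality by rearranging to $a_i\leq(3q-1)b_i$ (or by a case split on $b_i$ versus $a_i/2$) where the paper writes the direct chain $s_{i-1}=(q+1)y+r\geq 2y+r\geq 1.5(y+r)$, and your zero-based indexing and terminal-state bookkeeping (the algorithm may also halt with $b=1$, not only $b=0$) would need a touch of care, but none of this changes the argument.
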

\noindent Using a slightly more sophisticated potential function yields an even tighter result.   

\begin{THEOREM}
\label{THM:main2}
For any $x,y \in \N$, the Euclidean Algorithm performs at most $\log _{\phi} (\phi x+ y)$ iterations.
\end{THEOREM}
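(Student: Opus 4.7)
The plan is to apply the Potential Method as in Theorem \ref{THM:main1}, but with the sharper potential $P(a, b) := \phi a + b$, tuned so that the defining identity $\phi^2 = \phi + 1$ of the golden ratio yields a contraction factor of exactly $\phi$ per Euclidean step, rather than $3/2$.

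The key claim I need to establish is that, for any state $(a, b)$ with $a \geq b \geq 1$, one iteration transforms $(a, b)$ into $(b, a \bmod b)$ and satisfies
\[
P(b, a \bmod b) \;\leq\; \frac{P(a, b)}{\phi}.
\]
The one-line verification: multiply through by $\phi$ to obtain $\phi^2 b + \phi(a \bmod b) \leq \phi a + b$; then substitute $\phi^2 = \phi + 1$ and cancel $b$ to reduce the inequality to $(a \bmod b) \leq a - b$, which holds because $\lfloor a/b \rfloor \geq 1$ whenever $a \geq b \geq 1$. Equivalently, the cancellation is powered by the identity $\phi - 1/\phi = 1$.

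Iterating this contraction over $K$ iterations (working WLOG with $x \geq y$, so that the invariant $a \geq b \geq 1$ is preserved by every step until termination) gives $P(S_K) \leq (\phi x + y)/\phi^K$. At termination $S_K = (\gcd(x,y), 0)$, whence $P(S_K) = \phi \cdot \gcd(x,y) \geq \phi$. Combining yields $\phi^{K+1} \leq \phi x + y$, i.e., $K \leq \log_\phi(\phi x + y) - 1$, which is in fact slightly sharper than the advertised bound.

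I do not anticipate a genuine obstacle: the hard part is choosing the right potential, not verifying the inequality. Once $P(a,b) = \phi a + b$ is written down, the proof collapses into a single line of arithmetic using $\phi^2 = \phi + 1$ together with the elementary bound $a \bmod b \leq a - b$.
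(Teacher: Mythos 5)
Your proof is correct and is essentially the paper's own argument: your potential $P(a,b)=\phi a + b$ is exactly $\phi$ times the paper's $s_i = x_i + \frac{1}{\phi}y_i$, the contraction by $1/\phi$ is verified via the same identity $\phi^2=\phi+1$ (your reduction to $a \bmod b \leq a-b$ is the paper's $q_{i-1}\geq 1$ in disguise), and your terminal bound $P\geq\phi$ matches the paper's $s_m\geq 1$. The only caveat is that your ``slightly sharper'' conclusion $K \leq \log_\phi(\phi x+y)-1$ counts transitions rather than recursive calls, so under the paper's convention (its $m$ equals your $K+1$) it is exactly the stated bound, not an improvement.
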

\vspace{-0.1in}
\noindent Here $\phi = \frac{1+\sqrt{5}}{2}  \approxeq 1.61803398875$ denotes the \emph{golden ratio}. 

\medskip

\noindent Our analysis should be contrasted with
%In order to see that our analysis is near-optimal, we remark 
the analysis of \cite{Lame1844} which states that if the Euclidean algorithm requires $m$ iterations, then it must be the case that $x \geq F_{m+2}$ and $y \geq F_{m+1}$, where $F_{i}$-s are the Fibonacci numbers: $1,1,2,3,5, \ldots$. 
And, indeed, selecting $x$ and $y$ as consecutive Fibonacci numbers demonstrates the tightness of \textbf{this} analysis. Now for a general pair $x$ and $y$ we have:
\begin{equation*}
\phi x+ y \geq \phi F_{m+2} + F_{m+1} \geq F_{m+3}  \gtrapprox  \phi^{m+1}.
\end{equation*}
Therefore $m \lessapprox \log_{\phi}(\phi x+y)$, implying the optimality of \textbf{our} analysis.

%where
%$\phi = \frac{1+\sqrt{5}}{2}  \approxeq 1.61803398875$
%is known as the \emph{golden ratio}. 
%Subsequently, by slightly altering the proof,

\section{The Algorithm and the Runtime Analysis}

In this section we review the algorithm and give the new, simplified runtime analysis. The description of the algorithm is given below in Algorithm \ref{alg:Euclid}. The algorithm is given non-negative integers, $x$ and $y$ as an input. We assume w.l.o.g that $x > y$.

\medskip

\begin{algorithm}[H] 
\label{alg:Euclid}
    \KwIn{Two non-negative integers, $x > y \in \N$}
   \KwOut{$\gcd(x,y)$}

    \bigskip  
    \lIf{$y=0$}{\Return $x$}{}
    
    \lIf{$y=1$}{\Return $1$}{}
    
    {\Return $\gcd(y, x \bmod y)$}
     \caption{Euclidean Algorithm}
\end{algorithm}

\medskip

For the sake of analysis, let us fix $x > y$ and let $m$ denote the number of the (recursive) iterations of algorithm. 

\begin{definition}
For $1 \leq i \leq m$, let $x_i$ and $y_i$ denote the values of $x$ and $y$ in the $i$-th iteration of the algorithm, respectively. In particular, $x_1 = x, y_1 = y$. 
\end{definition}

\begin{observation}
For $1 \leq i \leq m$: $x_i > y_i \geq 0$.
\end{observation}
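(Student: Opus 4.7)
The plan is a straightforward induction on the iteration index $i$, exploiting the definition of the recursive call in Algorithm \ref{alg:Euclid}.

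For the base case $i=1$, the desired inequalities $x_1 > y_1 \geq 0$ are exactly the input assumption (we have $x > y \in \N$, so $y \geq 0$, and $x > y$). For the inductive step, suppose $x_i > y_i \geq 0$ and that the algorithm reaches a further iteration $i+1 \leq m$. Since the algorithm did not return at iteration $i$, neither of the two base-case guards triggered, so in particular $y_i \neq 0$; that gives $y_i \geq 1$. The recursive call $\gcd(y_i, x_i \bmod y_i)$ then sets $x_{i+1} = y_i$ and $y_{i+1} = x_i \bmod y_i$.

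From the standard property of the mod operation we have $0 \leq x_i \bmod y_i < y_i$, i.e.\ $0 \leq y_{i+1} < y_i = x_{i+1}$, which closes the induction.

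I do not expect any obstacle here: the statement is a direct consequence of the recursive definition combined with the fact that the algorithm halts as soon as $y=0$ (or $y=1$), so every iteration actually performed satisfies the invariant $x > y \geq 0$. The only small point to be careful about is to invoke the halting condition to justify that the recursive call is well-defined (division by a nonzero $y_i$).
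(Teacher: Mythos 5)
Your proof is correct and rests on the same key fact as the paper's: since $x_i = y_{i-1}$ and $y_i = x_{i-1} \bmod y_{i-1}$, the standard property of the mod operation gives $0 \leq y_i < x_i$. You merely package this as an explicit induction and, to your credit, spell out the small point the paper leaves implicit, namely that the algorithm not having halted guarantees $y_{i-1} \geq 1$ so the mod is well-defined.
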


\begin{proof}
Follows from the fact that $y_i = x_{i-1} \bmod y_{i-1} = x_{i-1} \bmod x_{i}$, which attains values between $0$ and $x_{i}-1$.
\end{proof}

\begin{definition}
For $1 \leq i \leq m$, we define $s_i \eqdef x_i + y_i$. 
\end{definition}

Next, we will show that $s_i$ is a \emph{potential} function for this algorithm. In particular, the function loses a constant fraction of its mass in every step, yet it is bounded away from zero. The following is immediate from the definition, given the above observation.

\begin{corollary}
\label{cor1}
For $1 \leq i \leq m$: $s_i \geq 1$.
\end{corollary}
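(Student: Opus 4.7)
The plan is to simply combine the preceding observation with the integrality of the values involved. Since the inputs $x, y$ are non-negative integers and the recursion only applies the operations $y_{i+1} = x_i \bmod y_i$ and $x_{i+1} = y_i$, every $x_i$ and $y_i$ remains a non-negative integer. The observation gives $x_i > y_i \geq 0$, and together with integrality this forces $x_i \geq y_i + 1 \geq 1$.

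From here, the bound on $s_i$ is immediate: $s_i = x_i + y_i \geq 1 + 0 = 1$. I would write this as a single sentence, citing the observation and noting integrality, then conclude by adding $y_i \geq 0$ to the inequality $x_i \geq 1$.

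There is no real obstacle here; the only ``subtlety'' to flag is that the claim uses the integrality of $x_i, y_i$ (a strict inequality between reals would not yield $x_i \geq 1$), so I would make that explicit to keep the short proof self-contained.
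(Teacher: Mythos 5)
Your proof is correct and matches the paper's (implicit) argument: the paper simply declares the corollary immediate from the definition and the observation $x_i > y_i \geq 0$, which is exactly the combination of integrality and the observation you spell out. Your explicit remark about integrality is a reasonable addition but not a different approach.
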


The following lemma is the heart of the argument.

\begin{lemma}
\label{lem:main1}
For $1 \leq i \leq m$: $s_{i} \leq 2/3 \cdot s_{i-1}$.
\end{lemma}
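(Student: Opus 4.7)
The plan is to unroll the recurrence underlying the algorithm and turn the inequality into a short algebraic check. By the recursive call in Algorithm~\ref{alg:Euclid}, we have $x_i = y_{i-1}$ and $y_i = x_{i-1} \bmod y_{i-1}$. Since the algorithm has not returned at step $i-1$, the observation gives $x_{i-1} > y_{i-1} \geq 1$, so Euclidean division yields integers $q \geq 1$ and $0 \leq y_i < y_{i-1}$ with
\[
x_{i-1} = q\, y_{i-1} + y_i.
\]

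The next step is to substitute this into the expression $2 s_{i-1} - 3 s_i$ and show it is nonnegative. Expanding,
\[
2 s_{i-1} - 3 s_i
= 2(x_{i-1} + y_{i-1}) - 3(y_{i-1} + y_i)
= 2x_{i-1} - y_{i-1} - 3 y_i
= (2q-1)\, y_{i-1} - y_i.
\]
Since $q \geq 1$, we have $2q - 1 \geq 1$, and since $y_i < y_{i-1}$, the right-hand side is at least $y_{i-1} - y_i > 0$. Rearranging gives $s_i \leq \tfrac{2}{3}\, s_{i-1}$, as claimed.

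The only real subtlety, and the one point where the constant $2/3$ is actually used, is the inequality $q \geq 1$: this is where we need that at each recursive step the first argument strictly exceeds the second. I would therefore highlight, right after invoking Euclidean division, the bound $q = \lfloor x_{i-1}/y_{i-1} \rfloor \geq 1$ coming from the preceding observation. Everything else is a one-line algebraic identity.
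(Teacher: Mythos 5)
Your proof is correct and takes essentially the same approach as the paper: both rest on the Euclidean division $x_{i-1}=q\,y_{i-1}+r$ together with the two facts $q\geq 1$ and $r<y_{i-1}$. The only difference is presentational --- the paper lower-bounds $s_{i-1}$ by $1.5\,s_i$ through a chain of inequalities, while you verify the equivalent statement $2s_{i-1}-3s_i\geq 0$ by direct substitution.
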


\begin{proof}
Recall that $s_{i-1} = x_{i-1} + y_{i-1}$.
Let us divide $x_{i-1}$ by $y_{i-1}$ with a reminder. In particular, 
$$x_{i-1} = q_{i-1} \cdot y_{i-1} + r_{i-1}, \text{ where } 0 \leq r_{i-1} \leq y_{i-1} -1.$$
Observe that $s_{i} = y_{i-1} + r_{i-1}$ and in addition, $q_{i-1} \geq 1$, as $x_{i-1} > y_{i-1}$. We obtain the following:
\begin{equation*}
s_{i-1} = %x_{i-1} + y_{i-1} = 
(q_{i-1} + 1) y_{i-1} + r_{i-1} \geq 2y_{i-1} + r_{i-1} \geq 2y_{i-1} + r_{i-1} - (y_{i-1} - r_{i-1})/2 = 1.5y_{i-1} + 1.5r_{i-1} = 1.5 s_{i}.
\end{equation*}
The second inequality holds since $r_{i-1} < y_{i-1}$.
\end{proof}

By applying the lemma repeatedly, we obtain:

\begin{corollary}
\label{cor2}
For $1 \leq i \leq m$: $s_i \leq s_1 \cdot (2/3)^{i-1}$.
\end{corollary}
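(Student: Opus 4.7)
The plan is to derive Corollary \ref{cor2} by a straightforward induction on $i$, using Lemma \ref{lem:main1} as the engine at each step. There is essentially no obstacle here; the statement is a telescoping consequence of the per-step contraction already established.

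First, I would handle the base case $i = 1$, where the claim $s_1 \leq s_1 \cdot (2/3)^0 = s_1$ holds trivially. Then, for the inductive step, I would assume $s_{i-1} \leq s_1 \cdot (2/3)^{i-2}$ for some $i \geq 2$ and apply Lemma \ref{lem:main1} to get
\begin{equation*}
s_i \;\leq\; \tfrac{2}{3}\, s_{i-1} \;\leq\; \tfrac{2}{3} \cdot s_1 \cdot (2/3)^{i-2} \;=\; s_1 \cdot (2/3)^{i-1},
\end{equation*}
which closes the induction. Equivalently, one can simply multiply together the $i-1$ inequalities $s_j \leq (2/3)\, s_{j-1}$ for $j = 2, \ldots, i$, which telescopes to the same bound.

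The only thing to double-check is that Lemma \ref{lem:main1} is indeed applicable for every index in the range, i.e.\ that $s_j$ is defined whenever the lemma is invoked; this is guaranteed because we are working with $1 \leq i \leq m$ and the $s_j$ are defined precisely on this range. Since the argument is purely mechanical, I would keep the write-up to a single line or two, essentially stating ``by repeated application of Lemma \ref{lem:main1}'' and displaying the resulting chain of inequalities.
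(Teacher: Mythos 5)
Your proposal is correct and matches the paper exactly: the paper dispatches this corollary with the single remark ``by applying the lemma repeatedly,'' which is precisely your induction/telescoping argument spelled out. Your explicit handling of the base case $i=1$ is a small but welcome bit of extra care, since the lemma as stated would otherwise reference an undefined $s_0$.
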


Theorem \ref{THM:main1} follows immediately from the above.

\begin{proof}[Proof of Theorem \ref{THM:main1}]
By Corollaries \ref{cor1} and \ref{cor2}:
$$ 1 \leq s_m \leq s_1 \cdot (2/3)^{m-1} = (x+y) \cdot (2/3)^{m-1}.$$
Therefore, $m \leq \log _{1.5} (x+y) + 1$.
\end{proof}

\subsection{Improved Analysis}

In this section we show that using a slightly more sophisticated potential function yields an even tighter result.  As before, for the sake of analysis, let us fix $x > y$ and let $m$ denote the number of the (recursive) iterations of algorithm. In addition, let $\phi = \frac{1+\sqrt{5}}{2}  \approxeq 1.61803398875$
denote the golden ratio.

\begin{definition}
For $1 \leq i \leq m$, we define $s_i \eqdef x_i + \frac{1}{\phi} \cdot y_i $. 
\end{definition}

As before, for $1 \leq i \leq m$: $s_i \geq 1$. The following lemma mirrors Lemma \ref{lem:main1}.

\begin{lemma}
For $1 \leq i \leq m$: $s_{i} \leq \frac{1}{\phi} \cdot s_{i-1}$.
\end{lemma}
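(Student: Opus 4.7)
The plan is to mimic the structure of the proof of Lemma \ref{lem:main1}, but now exploit the defining identity of the golden ratio, namely $\phi^2 = \phi + 1$, equivalently $1 + \tfrac{1}{\phi} = \phi$. The whole argument will again reduce to the single fact that the quotient $q_{i-1}$ in the Euclidean division is at least $1$.

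First I would write out the Euclidean step: divide with remainder $x_{i-1} = q_{i-1} \cdot y_{i-1} + r_{i-1}$ where $0 \leq r_{i-1} < y_{i-1}$, and note as before that $q_{i-1} \geq 1$ because $x_{i-1} > y_{i-1}$. By the recursion, $x_i = y_{i-1}$ and $y_i = r_{i-1}$, so
\begin{equation*}
s_i \;=\; y_{i-1} + \tfrac{1}{\phi} r_{i-1} \qquad\text{and}\qquad s_{i-1} \;=\; x_{i-1} + \tfrac{1}{\phi} y_{i-1} \;=\; \bigl(q_{i-1} + \tfrac{1}{\phi}\bigr) y_{i-1} + r_{i-1}.
\end{equation*}

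Next I would use $q_{i-1} \geq 1$ to lower bound $s_{i-1} \geq \bigl(1 + \tfrac{1}{\phi}\bigr) y_{i-1} + r_{i-1}$, and then invoke the golden-ratio identity $1 + \tfrac{1}{\phi} = \phi$ to rewrite this as $\phi \, y_{i-1} + r_{i-1}$. Factoring out $\phi$ gives $s_{i-1} \geq \phi \bigl(y_{i-1} + \tfrac{1}{\phi} r_{i-1}\bigr) = \phi \cdot s_i$, which is the desired inequality. Note that no case analysis or intermediate bound on $r_{i-1}$ is needed here; the weighting by $1/\phi$ has been chosen precisely so that the inequality $r_{i-1} < y_{i-1}$ (which was critical in Lemma \ref{lem:main1}) is no longer invoked.

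The only possible obstacle is recognizing that $\phi$ is exactly the right weight to put on $y_i$: any other weight would force us to either use $q_{i-1} \geq 2$ (failing on steps where the quotient equals $1$, which is precisely the Fibonacci-worst-case pattern) or use the extra slack from $r_{i-1} < y_{i-1}$ (as in Lemma \ref{lem:main1}). Using the identity $1 + 1/\phi = \phi$ makes the inequality $q_{i-1} \geq 1$ alone suffice, which is exactly why this weighted potential yields the tight bound $\log_\phi(\phi x + y)$ in Theorem \ref{THM:main2}.
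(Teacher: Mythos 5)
Your proof is correct and is essentially identical to the paper's: both perform the Euclidean division, use $q_{i-1}\geq 1$, and invoke the identity $1+\tfrac{1}{\phi}=\phi$ to factor out $\phi$ and obtain $s_{i-1}\geq \phi\cdot s_i$. Your closing remark correctly identifies why the weight $1/\phi$ makes the bound $r_{i-1}<y_{i-1}$ unnecessary, but this is commentary rather than a deviation from the paper's argument.
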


\begin{proof}
Recall that $s_{i-1} = x_{i-1} + \frac{1}{\phi}\cdot y_{i-1}$.
Let us divide $x_{i-1}$ by $y_{i-1}$ with a reminder. In particular, 
$$x_{i-1} = q_{i-1} \cdot y_{i-1} + r_{i-1}, \text{ where } 0 \leq r_{i-1} \leq y_{i-1} -1.$$
Observe that $s_{i} = y_{i-1} + \frac{1}{\phi} \cdot r_{i-1}$ and in addition, $q_{i-1} \geq 1$, as $x_{i-1} > y_{i-1}$. We obtain the following:
\begin{equation*}
s_{i-1} = %x_{i-1} + y_{i-1} = 
\left( q_{i-1} + \frac{1}{\phi} \right) y_{i-1} + r_{i-1} \geq
\left(1 + \frac{1}{\phi} \right)  y_{i-1} + r_{i-1} =
\phi \cdot y_{i-1} + r_{i-1} = \phi \left(y_{i-1} + \frac{1}{\phi} \cdot r_{i-1}  \right) = \phi \cdot s_{i}.
\end{equation*}
Recall that $1 + \frac{1}{\phi} = \phi.$
\end{proof}

The proof of Theorem \ref{THM:main2} follows by repeating the previous argument.

\section{Conclusion \& Open Questions}

In this short note we add one (more) line of analysis to the well-known Euclidean Algorithm. The new analysis does not require any background and can be taught even in an introductory-level undergraduate class. It would be nice to see if we could replace other kind of analyses that rely on Fibonacci numbers by a potential argument. One such example is the analysis of the height of an AVL-Tree \cite{AVL62,Sedgewick83}.

\bibliographystyle{alpha}
\bibliography{C:/Work/Papers/bibliography}

\end{document}